\newcommand\version{September 18, 2021}
\newtheorem{theorem}{Theorem}%[section]
\newtheorem{lemma}[theorem]{Lemma}
\newtheorem{corollary}[theorem]{Corollary}
\theoremstyle{definition}
\theoremstyle{remark}
\newtheorem{remark}[theorem]{Remark}
\newcommand{\1}{\mathbbm{1}}
\renewcommand{\epsilon}{\varepsilon}
\renewcommand{\phi}{\varphi}
\newcommand{\R}{\mathbb{R}}
\DeclareMathOperator{\supp}{supp}
\DeclareMathOperator{\sgn}{sgn}
\begin{document}

\title[Minimizers for a one-dimensional interaction energy --- \version]{Minimizers for a\\ one-dimensional interaction energy}

\author{Rupert L. Frank}
\address[Rupert L. Frank]{Mathe\-matisches Institut, Ludwig-Maximilans Universit\"at M\"unchen, The\-resienstr.~39, 80333 M\"unchen, Germany, and Munich Center for Quantum Science and Technology, Schel\-ling\-str.~4, 80799 M\"unchen, Germany, and Mathematics 253-37, Caltech, Pasa\-de\-na, CA 91125, USA}
\email{r.frank@lmu.de}

\renewcommand{\thefootnote}{${}$} \footnotetext{\copyright\, 2021 by the author. This paper may be reproduced, in its entirety, for non-commercial purposes.\\
Partial support through U.S.~National Science Foundation grant DMS-1954995 and through the Deutsche Forschungsgemeinschaft (German Research Foundation) through Germany’s Excellence Strategy EXC-2111-390814868 is acknowledged.}

\begin{abstract}
	We solve explicitly a certain minimization problem for probability measures  in one dimension involving an interaction energy that arises in the modelling of aggregation phenomena. We show that in a certain regime minimizers are absolutely continuous with an unbounded density, thereby settling a question that was left open in previous works.
\end{abstract}

\maketitle

\section{Introduction and main results}

Motivated by applications in physics, mathematical biology and economics, a certain class of minimization problems involving a nonlocal interaction energy has attracted a lot of attention recently in the mathematics literature. In these models `particles' interact with each other through a pair potential that corresponds to a force that is repulsive on short distances and attractive on long ones. For background and also the connection to a class of time-dependent aggregation equations we refer to \cite{BaCaLaRa,BeTo,CaCaPa,CaFiPa,DaLiMC,DaLiMC2,Fr} and the references therein.

Here we study one very simple family of minimization problem of this type in one spatial dimension. This family has been study before, but a certain regime has been left open and it is our goal to complete this investigation. We denote by $P(\R)$ the set of Borel probability measure on $\R$ and for $\mu\in P(\R)$ and a parameter $\alpha>2$ we consider the energy functional
\begin{equation}
	\label{eq:energy}
	\mathcal E_\alpha[\mu] = \frac12 \iint_{\R\times\R} \left( \alpha^{-1}|x-y|^\alpha -2^{-1}|x-y|^2\right)d\mu(x)\,d\mu(y) \,.
\end{equation}
The corresponding minimization problem is
$$
E_\alpha:=\inf\left\{ \mathcal E_\alpha[\mu]  :\ \mu\in P(\R) \right\}.
$$
Recently, Davies, Lim and McCann \cite[Theorem 2.2]{DaLiMC} have shown that for $\alpha\geq 3$ the minimizers for $E_\alpha$ are precisely of the form $\mu = 2^{-1}(\delta_{a-1/2}+\delta_{a+1/2})$ for some $a\in\R$. Earlier, Kang, Kim, Lim and Seo \cite[Theorem 2]{KaKiLiSe} had shown that in the case $2<\alpha<3$, for any $m\in(0,1)$ and $a\in\R$ the measure $m\delta_{a-1/2}+(1-m)\delta_{a+1/2}$ is a saddle point for $E_\alpha$ (with respect to the $\infty$-Wasserstein metric) and therefore, in particular, not a minimizer. Finding the minimizer for $2<\alpha<3$ was explicitly stated as an open problem in \cite[Remark 1.6]{DaLiMC2}. As far as we know, up to now it was not even known whether or not optimizers are supported on a finite number of points, as the mild repulsivity assumption in \cite{BaCaLaRa,CaFiPa} barely fails in the above problem.

Our goal in this paper is to explicitly compute the minimial energy $E_\alpha$ and its minimizers. This settles the above open problem and shows, in particular, that for $2<\alpha<3$ minimizers are absolutely continuous and supported on an interval.

\begin{theorem}\label{main}
	Let $2<\alpha<3$ and set
	\begin{equation}
		\label{eq:ralpha}
		R_\alpha := \left( \frac{\sqrt\pi}{2}\ \frac{\Gamma(\frac{3-\alpha}2)}{\Gamma(\frac{4-\alpha}2)}\ \frac{\sin((\alpha-1)\frac\pi2)}{(\alpha-1)\frac\pi2} \right)^{\frac1{\alpha-2}}.
	\end{equation}
	Then
	\begin{equation}
		\label{eq:ealpha}
		E_\alpha = - \frac{\alpha-2}{2\alpha(4-\alpha)}\, R_\alpha^2 \,.
	\end{equation}
	Moreover, the infimum is attained if and only if for some $a\in\R$,
	$$
	d\mu(x) = C_\alpha^{-1}\, R_\alpha^{\alpha-2}\, (R_\alpha^2-(x-a)^2)^{-\frac{\alpha-1}2}\, \1(|x-a|<R_\alpha)\,dx \,,
	$$
	where $C_\alpha$ is an explicit normalization constant given in \eqref{eq:calpha}.
\end{theorem}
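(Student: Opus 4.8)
The plan is to exhibit the measure in the statement as a solution of the Euler--Lagrange equation and then to upgrade this to global minimality by a convexity (conditional positive-definiteness) argument, using the translation invariance of $\mathcal E_\alpha$ to neutralize the single non-coercive direction. By translation invariance it suffices to treat $a=0$; write $\mu_0$ for the candidate centered at the origin, $R:=R_\alpha$, and $W(r)=\alpha^{-1}|r|^\alpha-2^{-1}r^2$. Since $\mathcal E_\alpha$ is quadratic in $\mu$, for any $\mu\in P(\R)$ with $\nu:=\mu-\mu_0$ one has
\[
\mathcal E_\alpha[\mu]=\mathcal E_\alpha[\mu_0]+\int U_{\mu_0}\,d\nu+\tfrac12\iint W(x-y)\,d\nu(x)\,d\nu(y),
\]
where $U_{\mu_0}(x):=\int W(x-y)\,d\mu_0(y)$. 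Thus the two things to prove are that the linear term is $\ge 0$ and that the quadratic form is $\ge 0$.

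For the linear term I would establish the variational relations $U_{\mu_0}=\ell$ on $[-R,R]$ and $U_{\mu_0}\ge\ell$ off it, whence $\int U_{\mu_0}\,d\nu=\int U_{\mu_0}\,d\mu-\ell\ge0$. The equality on the support is the heart of the matter. Differentiating $U_{\mu_0}$ twice and using evenness reduces it to the requirement that $\int_{-R}^{R}|x-y|^{\alpha-2}\,d\mu_0(y)$ equal the constant $(\alpha-1)^{-1}$ on $(-R,R)$; after the scaling $y=Ru$, $x=Rv$ everything collapses to the single identity
\[
\int_{-1}^{1}|v-u|^{\alpha-2}\,(1-u^2)^{-\frac{\alpha-1}2}\,du=\frac{\pi}{\sin((\alpha-1)\tfrac\pi2)}\qquad(v\in(-1,1)).
\]
I would prove the $v$-independence by showing that the $v$-derivative, a principal-value singular integral with kernel $|v-u|^{\alpha-3}\sgn(v-u)$, vanishes on $(-1,1)$---an airfoil/Tricomi-type identity for the weight $(1-u^2)^{-(\alpha-1)/2}$ obtainable from a Plemelj/Riemann--Hilbert analysis of $(z^2-1)^{-(\alpha-1)/2}$, or by quoting the corresponding Riesz-kernel formula. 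The value of the constant then follows by evaluating at $v=0$, where the integral becomes the Beta integral $B(\tfrac{\alpha-1}2,\tfrac{3-\alpha}2)=\Gamma(\tfrac{\alpha-1}2)\Gamma(\tfrac{3-\alpha}2)$ and the reflection formula gives $\pi/\sin((\alpha-1)\pi/2)$. Imposing the normalization then forces $R_\alpha^{\alpha-2}$ to take exactly the value in \eqref{eq:ralpha}, and the off-support inequality follows from the convexity $U_{\mu_0}''(x)=(\alpha-1)\int|x-y|^{\alpha-2}\,d\mu_0(y)-1>0$ for $|x|>R$, since $|x-y|^{\alpha-2}$ increases in $|x-y|$.

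For the quadratic form I would pass to Fourier space. The distributional Fourier transform of $|x|^\alpha$ equals a positive constant times $|\xi|^{-\alpha-1}$ away from the origin (the sign being governed by $1/\Gamma(-\alpha/2)>0$ for $2<\alpha<3$), while $-\tfrac12 r^2$ contributes only $(\int x\,d\nu)^2$. The subtlety is the non-integrable singularity at $\xi=0$: it is controlled precisely because, after using translation invariance to translate $\mu$ so that its barycenter agrees with that of $\mu_0$, the difference $\nu$ satisfies both $\nu(\R)=0$ and $\int x\,d\nu=0$, whence $|\hat\nu(\xi)|^2=O(\xi^4)$ and $\iint W\,d\nu\,d\nu=\mathrm{const}\cdot\int|\xi|^{-\alpha-1}|\hat\nu(\xi)|^2\,d\xi\ge0$ is finite. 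Combining the two nonnegativities gives $\mathcal E_\alpha[\mu]\ge\mathcal E_\alpha[\mu_0]$, so $\mu_0$ attains the infimum; equality forces $\hat\nu\equiv0$, i.e.\ $\mu=\mu_0$ up to translation, which is the asserted uniqueness. Finally $E_\alpha=\tfrac12\ell=\tfrac12 U_{\mu_0}(0)$ is computed from the two Beta-function moments $\int|y|^\alpha\,d\mu_0$ and $\int y^2\,d\mu_0$ and the value of $R_\alpha^{\alpha-2}$, a routine Gamma-function simplification yielding \eqref{eq:ealpha}. I expect the main obstacle to be this key integral identity---both the airfoil/Tricomi vanishing that yields $v$-independence and the correct handling of the Fourier regularization at $\xi=0$ in tandem with the translation zero-mode, which is exactly the direction responsible for uniqueness only up to translation.
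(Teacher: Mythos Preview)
Your proposal is correct and follows the same overall architecture as the paper: exhibit the candidate measure, verify the Euler--Lagrange relations $U_{\mu_0}=\ell$ on the support and $U_{\mu_0}\ge\ell$ off it, and upgrade to global minimality and uniqueness via the conditional positive-definiteness of $|x|^\alpha$ on signed measures with vanishing zeroth and first moments (the paper packages this as Lemma~\ref{lopes}, citing Lopes for the positivity; you give the Fourier argument directly, which is the same content).

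The differences are in the bookkeeping. For the equality on the support, the paper establishes the very identity you call the Tricomi/airfoil vanishing---its formula \eqref{eq:int}---via Fourier/Bessel transforms, and then \emph{integrates three times} to obtain the full potential $\int_{-1}^1|x-y|^\alpha(1-y^2)^{-(\alpha-1)/2}\,dy$ explicitly, including for $|x|>1$; the off-support inequality then drops out because the explicit remainder is a manifestly nonnegative integral. You instead \emph{differentiate twice} to land on the same identity, and handle the off-support inequality by the monotonicity argument $x\mapsto\int|x-y|^{\alpha-2}\,d\mu_0(y)$ is increasing for $x>R$ (since $\alpha-2>0$), equals $(\alpha-1)^{-1}$ at $x=R$ by continuity, hence $U_{\mu_0}''>0$ there; combined with $U_{\mu_0}'(\pm R)=0$ (from $C^1$-regularity of the potential, which holds since $W\in C^2$ for $\alpha>2$ and $\mu_0$ is a finite compactly supported measure) this gives $U_{\mu_0}>\ell$ for $|x|>R$. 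Both routes work; the paper's explicit remainder is a bit more informative, while your monotonicity argument is shorter and avoids the Bessel-function computation, at the cost of not giving a closed-form expression for $U_{\mu_0}$ off the support.
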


Denoting the measure in the theorem with $a=0$ by $\mu_\alpha$, it is not difficult to see that $\mu_\alpha\overset{\ast}\rightharpoonup 2^{-1}(\delta_{-1/2}+\delta_{+1/2})$ in $M(\R) = (C_0(\R))^*$ and $E_\alpha\to E_3$ as $\alpha\nearrow 3$. Thus one can think of the transition at $\alpha=3$ as a `singular bifurcation'.

\medskip

The same technique of proof used for Theorem \ref{main} allows us to solve the following related minimization problem. Let now $-1<\alpha<2$. For $\mu\in P(\R)$ we consider the energy functional
\begin{equation}
	\label{eq:energy2}
	\mathcal E_\alpha[\mu] = \frac12 \iint_{\R\times\R} \left( 2^{-1}|x-y|^2 - \alpha^{-1}|x-y|^\alpha \right)d\mu(x)\,d\mu(y)
\end{equation}
and the corresponding minimization problem
$$
E_\alpha:=\inf\left\{ \mathcal E_\alpha[\mu]  :\ \mu\in P(\R) \right\}.
$$
For $\alpha=0$ we understand $\alpha^{-1}|x-y|^\alpha$ as $\ln |x-y|$.

\begin{theorem}\label{main2}
	Let $-1<\alpha<2$ and define $R_\alpha$ by \eqref{eq:ralpha}. Then the conclusions of Theorem~\ref{main} remain true, except that the sign of the right side in \eqref{eq:ealpha} is switched.
\end{theorem}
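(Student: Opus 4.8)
The plan is to run, for the functional \eqref{eq:energy2}, exactly the argument that establishes Theorem~\ref{main}, tracking the two places where the range $-1<\alpha<2$ and the reversed sign of the integrand enter. Write $W_\alpha(r)=2^{-1}r^2-\alpha^{-1}r^\alpha$ for the pair potential and $U^\mu(x)=\int_\R W_\alpha(|x-y|)\,d\mu(y)$ for the induced potential, so that $\mathcal E_\alpha[\mu]=\tfrac12\int U^\mu\,d\mu$. A minimizer must satisfy the Euler--Lagrange conditions $U^\mu=\lambda$ on $\supp\mu$ and $U^\mu\ge\lambda$ on $\R$, for some multiplier $\lambda$. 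Seeking a centered symmetric candidate, the quadratic part contributes $\int_\R|x-y|^2\,d\mu(y)=x^2+\int y^2\,d\mu(y)$, which is affine in $x^2$, so the equality condition collapses to the single scalar identity
\[
\int_\R |x-y|^\alpha\,d\mu(y) = \mathrm{const} + \tfrac{\alpha}{2}\,x^2, \qquad x\in\supp\mu .
\]
This is verbatim the identity arising in Theorem~\ref{main}; in particular the solving density is again $(R^2-y^2)^{-(\alpha-1)/2}\1(|y|<R)$, with $R$ pinned to $R_\alpha$ of \eqref{eq:ralpha} by matching the coefficient of $x^2$ and imposing normalization. Since $W_\alpha$ is here the negative of the Theorem~\ref{main} potential at the same $\alpha$, the multiplier obeys $\lambda=-\lambda^{(1)}$, and $\mathcal E_\alpha[\mu]=\tfrac12\lambda$ then gives \eqref{eq:ealpha} with the opposite sign, as claimed.

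First I would re-examine that this integral identity, a Beta-integral/hypergeometric computation, continues to hold for $-1<\alpha<2$. I would obtain it by analytic continuation in $\alpha$ from the range already treated, the only convergence constraints being integrability of $|x-y|^\alpha$ across the diagonal, needing $\alpha>-1$, and of the density, which is automatic since $(\alpha-1)/2<1/2$; the case $\alpha=0$ is recovered through the limit $\alpha^{-1}(r^\alpha-1)\to\ln r$.

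The substantive point, and the main obstacle, is minimality, because \eqref{eq:energy2} is at fixed $\alpha$ the negative of the functional in Theorem~\ref{main}, so convexity cannot simply be inherited and must be re-derived in the new range. Here I would use translation invariance to reduce to competitors with the same mean as $\mu$, so that the relevant variations $\sigma=\nu-\mu$ satisfy $\sigma(\R)=0$ and $\int x\,d\sigma=0$. On this subspace the degenerate quadratic contribution $\iint|x-y|^2\,d\sigma\,d\sigma=-2(\int x\,d\sigma)^2$ vanishes, and the energy reduces to $-\tfrac1\alpha\iint|x-y|^\alpha\,d\sigma\,d\sigma$. Its sign is governed by the classical Schoenberg/Fourier fact that $\widehat{|x|^\alpha}(\xi)=c_\alpha|\xi|^{-\alpha-1}$ with $c_\alpha=-2\,\Gamma(\alpha+1)\sin(\tfrac{\pi\alpha}2)$, so that $c_\alpha<0$ for $0<\alpha<2$ (conditional negative definiteness) and $c_\alpha>0$ for $-1<\alpha<0$ (ordinary positive definiteness), with the logarithmic energy covering $\alpha=0$; in each case the product with $-\alpha^{-1}$ is a nonnegative quadratic form on $\sigma$ with $\sigma(\R)=0$ and $\int x\,d\sigma=0$, strictly positive unless $\sigma=0$. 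This yields convexity after the translation reduction, hence global minimality and uniqueness up to translation, once the Euler--Lagrange inequality off the support is in hand.

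That off-support inequality is the second place where the range matters, and I expect its verification to be the most delicate computation, since its direction is opposite to the one in Theorem~\ref{main}: for $|x|>R_\alpha$ one now needs $\int_\R|x-y|^\alpha\,d\mu(y)\le \mathrm{const}+\tfrac\alpha2 x^2$, the transition occurring precisely at $\alpha=2$. I would establish it by analyzing the sign of the explicit function $x\mapsto \int_\R|x-y|^\alpha\,d\mu(y)-\tfrac\alpha2 x^2-\mathrm{const}$ outside the support, reusing the same special-function representation, and checking the monotonicity or convexity in $x$ that holds exactly when $\alpha<2$. Existence of a minimizer follows, finally, from lower semicontinuity together with tightness, the quadratic term confining mass, so that the explicit critical measure is attained.
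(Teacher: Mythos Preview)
Your proposal is correct and follows essentially the same route as the paper: the convexity step via the sign of $-\alpha^{-1}\widehat{|x|^\alpha}$ on mean-zero variations is exactly the paper's Lemma~\ref{riesz}, and the identification of the candidate together with analytic continuation of the integral identity is the content of Lemma~\ref{comp2} and its corollary. One technical point you underestimate: the explicit remainder from Lemma~\ref{comp}, namely $\int_1^{|x|}(y^2-1)^{-(3-\alpha)/2}(|x|-y)^2\,dy$, only continues analytically down to $\alpha>1$ (the integrand blows up at $y=1$ for $\alpha\le 1$), so the paper rewrites it as $D_\alpha(|x|-1)^2$ plus a triple integral with the inner integration running over $[|z|,\infty)$ before continuing to $-1<\alpha\le 1$; this rewriting is what actually makes the off-support sign transparent across the whole range, and your sketch (``reusing the same special-function representation'') does not yet capture it. Also note that for $-1<\alpha<0$ the off-support inequality you wrote flips direction once more upon dividing by $\alpha$, though this is harmless for the argument.
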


As far as we know, Theorems \ref{main} and \ref{main2} are proved here for the first time. There are precursors in the literature, notably the works \cite{CaHu} by Carrillo and Huang and \cite{Agetal} by Agarwal et al., and we now discuss the similarities and differences with these works.

In  \cite{CaHu} it is shown that the measures appearing in Theorems \ref{main} and \ref{main2} satisfy `half' of the Euler--Lagrange relations corresponding to the minimization problem $E_\alpha$. This does not allow one to conclude that these measures are minimizers (and neither is this claimed in \cite{CaHu}). Let us be more precise concerning the Euler--Lagrange relations. These are well known for a large class of minimization problems including $E_\alpha$ and appear, for instance, in \cite{BaCaLaRa}. They consist of two parts, namely first, that the `potential' generated by a measure is constant on the support of this measure and second, that this potential is nowhere smaller than this constant. What is shown in \cite{CaHu} is the constancy on the support. (In fact, three different proofs of this fact are given by Polya and Szeg\H{o} in \cite[Hilfssatz I]{PoSz}.) We should stress, however, that \cite{CaHu} also has results about a larger class of minimization problems, which are outside the scope of this paper.

In \cite{Agetal} the statement of Theorem \ref{main2} appears, but from a mathematically rigorous perspective the argument given there is not completely satisfactory. More precisely, in \cite{Agetal} (a) the existence of a minimizing measure is taken for granted, (b) the minimizing measure is assumed to be absolutely continuous and supported on an interval, (c) rather precise properties of the Sonin inversion formula are used. Issue (a) can be overcome using relatively standard tools in the calculus of variations; see, e.g., \cite{CaCaPa,SiSlTo}. Issue (b) is quite subtle and we are not aware of general theorems from which one can deduce the desired properties. We do not doubt that the results concerning (c) are correct, but we would like to stress that the arguments take place in a rather singular setting with unbounded and barely integrable functions. Also, in absence of an easily accessible reference more selfcontained arguments might be preferable. In \cite{CaHu} (which is not quoted in \cite{Agetal}) the authors employed a similar approach via singular integral equations, but replaced some of the general theory by direct arguments.

In view of these previous works, our contribution in this paper is threefold. On the one hand, we provide a mathematically complete proof of Theorem \ref{main2} and, on the other hand, we show that a modification of these ideas can be used to prove Theorem~\ref{main}. Finally, we provide a proof without any direct analysis of singular integral equations.

Our proof of Theorem \ref{main2} is rather different from the arguments in \cite{CaHu,Agetal}. Namely, we rely on an elegant convexity argument that Lopes \cite{Lo} developed in the framework of a problem studied in \cite{BuChTo,FrLi}. This argument has proved useful in several other works since \cite{Lo} and has been slightly strengthened in \cite{CaDeDoFrHo,CaDeFrLe,DaLiMC,DaLiMC2} (extension to measures and characterization of cases of equality). The upshot of this argument is that one only needs to `guess' a solution to the Euler--Lagrange relations of the minimization problem and then this solution is automatically the unique (up to translations) minimizer. This conclusion is familiar from convex minimization problems and, indeed, Lopes's realization was that there is a `hidden' convexity. We present this argument in Lemma~\ref{lopes}. We emphasize that this argument also proves existence of a minimizer. Thus, it takes care of issue (a) mentioned above and makes (b) obsolete.

To guess a solution of the Euler--Lagrange relations we could follow the arguments in \cite{Agetal} based on singular integral equations and the Sonin inversion formula. Instead we opt for another approach, based on Fourier analysis. It relies on the computation of two Fourier transforms (namely \eqref{eq:fourier1} and \eqref{eq:fourier2}) that are probably not completely standard, but nevertheless contained in the usual tables. For the proof of Theorem \ref{main2} we use some analytic continuation arguments which are a bit lengthy, but not deep. Computationally, our approach is not more involved than that in \cite{Agetal}.

\medskip

We conclude this introduction with two remarks. First, it is interesting to compare the results in this paper with those for the minimization problem, depending on a parameter $\beta>-1$,
$$
\inf\left\{ \frac12 \iint_{[-1,1]\times [-1,1]} \beta^{-1} |x-y|^\beta\,d\mu(x)\,d\mu(y):\ \mu\in P([-1,1]) \right\}
$$
with a `strict confinement' to the interval $[-1,1]$. For this problem, the minimizing measure is absolutely continuous for $\beta<1$ (indeed, it is $Z_\beta^{-1}(1+x^2)^{-(1+\beta)/2}\,dx$) and equal to $(1/2)(\delta_{-1}+\delta_1)$ for $\beta\geq 1$. These results are classical; see, e.g., \cite[Section 7.5]{PoSz} for $\beta\geq 0$.

Our second remark concerns the question to which extent some structural properties of the minimizers in Theorems~\ref{main} and \ref{main2} are universal in the sense that they are valid in similar, but more general minimization problems. One question is which additional properties of interaction kernels vanishing like a negative quadratic at the origin guarantee that minimizing measures do not have atoms. (Recall that if the interaction kernel vanishes faster than quadratically, then minimizing measures are supported on a finite number of points \cite{CaFiPa}.) Moreover, all our minimizers are even and they are decreasing with respect to the distance from the center of symmetry for $\alpha<1$ and increasing for $\alpha>1$. It is natural to inquiry which structural assumptions on the interaction kernel ensure these properties.

%%%%%%%%%%%%%%%
%%%%%%%%%%%%%%%

\section{Proof of Theorem \ref{main}}

The following lemma is the theoretical backbone of our argument. It reduces the proof of our main result to finding a measure with certain properties. It is strongly influenced by Lopes's work \cite{Lo}.

\begin{lemma}\label{lopes}
	Let $2<\alpha<4$ and assume that there are $\mu\in P(\R)$ and $\eta\in\R$ such that
	$$
	\phi_\alpha(x) := \int_\R \left( \alpha^{-1} |x-y|^\alpha - 2^{-1} |x-y|^2\right)d\mu(y) \,,
	\qquad x\in\R \,,
	$$
	satisfies
	\begin{equation}
		\label{eq:lopesass}
		\phi_\alpha\geq \eta
	\quad\text{on}\ \R
	\qquad\text{and}\qquad
	\phi_\alpha = \eta
	\quad\text{on}\ \supp\mu \,.
	\end{equation}
	Then $\mu$ is the unique (up to translations) minimizer for $E_\alpha$ and $\eta=2E_\alpha$.
\end{lemma}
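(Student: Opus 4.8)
The plan is to expand the quadratic energy about the candidate $\mu$ and to show that the linear term is controlled by the Euler--Lagrange hypothesis \eqref{eq:lopesass} while the quadratic remainder is nonnegative by a hidden convexity. Write $W(t):=\alpha^{-1}|t|^\alpha-2^{-1}|t|^2$ and, for finite signed measures $\rho,\tau$ of finite $\alpha$-th moment, set $D(\rho,\tau):=\iint_{\R\times\R}W(x-y)\,d\rho(x)\,d\tau(y)$, so that $\mathcal E_\alpha[\mu]=\tfrac12 D(\mu,\mu)$ and $\phi_\alpha=\int_\R W(\cdot-y)\,d\mu(y)$. Since $W$ is continuous and $W(t)\to+\infty$ as $|t|\to\infty$, it is bounded below, so $\mathcal E_\alpha[\nu]$ is well defined in $(-\infty,+\infty]$ for every $\nu\in P(\R)$; competitors with $\mathcal E_\alpha[\nu]=+\infty$ are harmless, so I may assume $\nu$ has finite energy, hence finite $\alpha$-th moment. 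Finiteness of $\phi_\alpha$ at one point of $\supp\mu$ likewise forces $\mu$ to have finite $\alpha$-th moment (again because $W$ controls $|t|^\alpha$ from below), so $D$ is finite on all the measures occurring below. First I would record that $\int_\R\phi_\alpha\,d\mu=\eta$, by the second relation in \eqref{eq:lopesass}, whence $\mathcal E_\alpha[\mu]=\eta/2$.

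The heart of the argument is the inequality
\[
\iint_{\R\times\R}|x-y|^\alpha\,d\sigma(x)\,d\sigma(y)\ \geq\ 0
\]
for every finite signed measure $\sigma$ of finite $\alpha$-th moment with $\sigma(\R)=0$ and $\int_\R x\,d\sigma(x)=0$, with equality only for $\sigma=0$. To prove it I would use the elementary representation, valid for $2<\alpha<4$,
\[
|t|^\alpha=\frac{1}{I(\alpha)}\int_0^\infty\frac{\tfrac12 t^2\xi^2-(1-\cos(t\xi))}{\xi^{1+\alpha}}\,d\xi,\qquad I(\alpha):=\int_0^\infty\frac{\tfrac12 u^2-(1-\cos u)}{u^{1+\alpha}}\,du\in(0,\infty),
\]
where convergence at $0$ uses $\alpha<4$, convergence at $\infty$ uses $\alpha>2$, and $I(\alpha)>0$ because $1-\cos u\leq\tfrac12 u^2$. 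Inserting $t=x-y$ and integrating against $d\sigma\,d\sigma$ (Fubini being justified by the finite $\alpha$-th moment), the polynomial part drops out since $\iint(x-y)^2\,d\sigma\,d\sigma=-2(\int x\,d\sigma)^2=0$, while $\iint(1-\cos((x-y)\xi))\,d\sigma\,d\sigma=-|\widehat\sigma(\xi)|^2$ with $\widehat\sigma(\xi)=\int_\R e^{-i\xi x}\,d\sigma(x)$. This yields
\[
\iint_{\R\times\R}|x-y|^\alpha\,d\sigma\,d\sigma=\frac{1}{I(\alpha)}\int_0^\infty\frac{|\widehat\sigma(\xi)|^2}{\xi^{1+\alpha}}\,d\xi\ \geq\ 0,
\]
and the two moment conditions are exactly what renders the right side finite, since they force $\widehat\sigma(\xi)=O(\xi^2)$ near the origin. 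Equality forces $\widehat\sigma\equiv0$, i.e.\ $\sigma=0$.

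With this in hand I would conclude as follows. Fix $\nu\in P(\R)$ of finite energy; since $\mathcal E_\alpha$ is translation invariant I may assume $\nu$ has the same barycenter as $\mu$, so that $\sigma:=\nu-\mu$ satisfies $\sigma(\R)=0$ and $\int_\R x\,d\sigma=0$. Expanding the bilinear form gives
\[
\mathcal E_\alpha[\nu]=\tfrac12 D(\mu,\mu)+D(\mu,\sigma)+\tfrac12 D(\sigma,\sigma).
\]
Here $D(\mu,\sigma)=\int_\R\phi_\alpha\,d\nu-\int_\R\phi_\alpha\,d\mu=\int_\R\phi_\alpha\,d\nu-\eta\geq0$ by the first relation in \eqref{eq:lopesass} and $\nu\in P(\R)$. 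Moreover, since the barycenters agree, the quadratic part of $W$ contributes $\iint|x-y|^2\,d\sigma\,d\sigma=0$, so $D(\sigma,\sigma)=\alpha^{-1}\iint|x-y|^\alpha\,d\sigma\,d\sigma\geq0$ by the key inequality. Hence $\mathcal E_\alpha[\nu]\geq\tfrac12 D(\mu,\mu)=\mathcal E_\alpha[\mu]$, which shows that $\mu$ minimizes and that $\eta=2E_\alpha$. If $\nu$ is itself a minimizer, both nonnegative terms vanish; $D(\sigma,\sigma)=0$ forces $\sigma=0$ by the equality case, so $\nu=\mu$ after the translation, giving uniqueness up to translations.

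I expect the main obstacle to be the rigorous justification of the key inequality rather than the surrounding bookkeeping: one must verify the representation formula and, above all, that the two vanishing moments make $\int_0^\infty\xi^{-1-\alpha}|\widehat\sigma(\xi)|^2\,d\xi$ finite — this is precisely where both $\alpha>2$ and the barycenter normalisation are indispensable — together with the Fubini step for signed measures. The remaining delicate points, namely the well-definedness of $\mathcal E_\alpha[\nu]$ and the finiteness of $D(\mu,\nu)$, are handled by the lower bound on $W$ and the finite $\alpha$-th moments of $\mu$ and $\nu$.
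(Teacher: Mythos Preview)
Your argument is correct and follows essentially the same route as the paper's proof: translate so the barycenters match, use the Euler--Lagrange conditions \eqref{eq:lopesass} to make the linear term $D(\mu,\sigma)\geq 0$, and use conditional positive definiteness of $|t|^\alpha$ on mean-zero, barycenter-zero signed measures for the quadratic term. The only substantive difference is that where the paper cites \cite[Theorem~2.4]{Lo} and \cite[Corollary~3.2]{DaLiMC} for this last fact (including the equality case), you supply a self-contained proof via the L\'evy--Khintchine type representation $|t|^\alpha=I(\alpha)^{-1}\int_0^\infty\xi^{-1-\alpha}\bigl(\tfrac12 t^2\xi^2-(1-\cos t\xi)\bigr)\,d\xi$; your bilinear expansion $\mathcal E_\alpha[\nu]=\tfrac12 D(\mu,\mu)+D(\mu,\sigma)+\tfrac12 D(\sigma,\sigma)$ is exactly the paper's computation of $\phi'(0)$ and $\phi''$ for the quadratic function $\phi(\theta)=\mathcal E_\alpha[(1-\theta)\mu+\theta\nu]$.
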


The proof shows that the second assumption in \eqref{eq:lopesass} can be slightly relaxed to requiring that $\phi_\alpha =\eta$ holds $\mu$-almost everywhere.

\begin{proof}
	Since the integrand in the definition of $\phi_\alpha$ is bounded from below, the integral is well-defined with values in $\R\cup\{+\infty\}$. Since $\phi_\alpha$ is finite on $\supp\mu$, we infer that $\int_\R |x|^\alpha\,d\mu<\infty$ and therefore $\phi_\alpha$ is finite everywhere and the center of mass of $\mu$ is well-defined. By translation invariance of the statement of Lemma \ref{lopes} we may assume that $\int_\R x \,d\mu(x) = 0$.
	
	Let $\tilde\mu\in P(\R)$. Our goal is to show that, if $\tilde\mu$ is not a translate of $\mu$, then $\mathcal E_\alpha[\tilde\mu]>\mathcal E_\alpha[\mu]$. We may assume that $\mathcal E_\alpha[\tilde\mu]<+\infty$ and, consequently, $\int_\R |x|^\alpha\,d\tilde\mu<\infty$ and the center of mass of $\tilde\mu$ is well-defined. By translation invariance of $\mathcal E_\alpha$ we may assume that $\int_\R x \,d\mu(x) = 0$. Our goal now is to prove that $\mathcal E_\alpha[\tilde\mu]>\mathcal E_\alpha[\mu]$ if $\tilde\mu\neq\mu$.
	
	For $\theta\in[0,1]$ we consider $\phi(\theta) := \mathcal E_\alpha[(1-\theta)\mu + \theta\tilde\mu]$ and show that (a) $\phi'(0)\geq 0$ and (b) $\phi''>0$ on $[0,1]$ if $\tilde\mu\neq\mu$. Since
	$$
	\phi(1) - \phi(0) = \int_0^1 \phi'(\theta)\,d\theta = \int_0^1 \left( \phi'(0) + \int_0^\theta \phi''(t)\,dt\right)d\theta = \phi'(0) + \int_0^1 (1-t) \phi''(t)\,dt \,,
	$$
	this implies that $\phi(1)>\phi(0)$ if $\tilde\mu\neq\mu$, which is the claimed strict inequality.
	
	We begin with the proof of (a). We write
	\begin{align*}
		\phi'(0) & = \iint_{\R\times\R} \left( \alpha^{-1} |x-y|^\alpha - 2^{-1} |x-y|^2\right) d\mu(x)\,d(\tilde\mu-\mu)(y) = \int_\R \phi_\alpha(y)\,d(\tilde\mu-\mu)(y) \\
		& = \int_\R \phi_\alpha(y)\,d\tilde\mu(y) - \int_\R \phi_\alpha(y)\,d\mu(y) \,.
	\end{align*} 
	The first and second assumptions in \eqref{eq:lopesass}, respectively, imply
	$$
	\int_\R \phi_\alpha(y)\,d\tilde\mu(y) \geq \eta \int_\R d\tilde\mu(y) = \eta
	\qquad\text{and}\qquad
	\int_\R \phi_\alpha(y)\,d\mu(y)  = \eta \int_\R d\mu(y) = \eta \,.
	$$
	Thus, $\phi'(0)\geq \eta - \eta =0$, as claimed.
	
	We finally turn to the proof of (b). Abbreviating $\nu:=\tilde\mu-\mu$ we have for all $\theta\in[0,1]$,
	\begin{align*}
		\phi''(\theta) & = \iint_{\R\times\R} \left( \alpha^{-1} |x-y|^\alpha - 2^{-1} |x-y|^2\right)d\nu(x)\,d\nu(y) \\
		& = \alpha^{-1} \iint_{\R\times\R} |x-y|^\alpha \,d\nu(x)\,d\nu(y) \,.
	\end{align*}
	In the last equality we expanded the square and used the fact that $\nu$ has vanishing integral and vanishing center of mass. The fact that $\phi''(\theta)\geq 0$ now follows from \cite[Theorem 2.4]{Lo}. Inspection of this proof (see also \cite[Corollary 3.2]{DaLiMC}) shows that one has, indeed, $\phi''(\theta)> 0$ if $\nu\neq 0$. This concludes the proof of the lemma.	
\end{proof}

\begin{lemma}\label{comp}
	Let $2<\alpha<3$. Then
	\begin{align}\label{eq:compint}
		& \int_{-1}^1 |x-y|^{\alpha} (1-y^2)^{-\frac{\alpha-1}2}\,dy \notag \\
		& = \alpha C_\alpha' x^2 + C_\alpha' +
		\begin{cases}
			0 & \text{if}\ |x|\leq 1 \,,\\
			\frac{\alpha(\alpha-1)(\alpha-2)}{2} C_\alpha \int_1^{|x|} (y^2-1)^{-\frac{3-\alpha}2} (|x|-y)^2\,dy & \text{if}\ |x|>1 \,,
		\end{cases}
	\end{align}
	with
	\begin{equation}
		\label{eq:calpha}
		C_\alpha := \sqrt\pi\ \frac{\Gamma(\frac{3-\alpha}2)}{\Gamma(\frac{4-\alpha}2)} 
		\qquad\text{and}\qquad
		C_\alpha' := \frac{\frac{(\alpha-1)\pi}2}{\sin\frac{(\alpha-1)\pi}2}\,.
	\end{equation}
\end{lemma}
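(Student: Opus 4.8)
The plan is to read the left side of \eqref{eq:compint} as the value $I(x):=\int_{-1}^1 |x-y|^\alpha (1-y^2)^{-\frac{\alpha-1}2}\,dy$ of the potential generated by the weight $w(y):=(1-y^2)^{-\frac{\alpha-1}2}\1(|y|<1)$, i.e.\ the convolution $|\cdot|^\alpha * w$, and to exploit that the asserted answer is a quadratic polynomial on $[-1,1]$. Since $2<\alpha<3$ one checks by dominated convergence that $I\in C^2(\R)$ with $I''(x)=\alpha(\alpha-1)K(x)$, where $K(x):=\int_{-1}^1 |x-y|^{\alpha-2} w(y)\,dy$ is continuous (note $\alpha-2\in(0,1)$, so the kernel is bounded and the singularity of $w$ is integrable). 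The identity then reduces to determining $K$, after which one integrates twice and fixes the two constants of integration using that $I$ is even together with the single value $I(0)$.

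First I would dispose of the two elementary constants. The substitution $u=y^2$ turns the values at the origin into Beta integrals, giving $K(0)=B(\tfrac{\alpha-1}2,\tfrac{3-\alpha}2)=\Gamma(\tfrac{\alpha-1}2)\Gamma(\tfrac{3-\alpha}2)=\frac{\pi}{\sin\frac{(\alpha-1)\pi}2}$ and $I(0)=B(\tfrac{\alpha+1}2,\tfrac{3-\alpha}2)=\frac{\alpha-1}2\,\frac{\pi}{\sin\frac{(\alpha-1)\pi}2}=C_\alpha'$, where the reflection formula and $\Gamma(2)=1$ were used. These already match the coefficient $\alpha C_\alpha'$ of $x^2$ and the constant term $C_\alpha'$ in \eqref{eq:compint}, \emph{provided} $K$ turns out to be constant on the support.

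The exterior behaviour I can then compute in closed form, with no Fourier input. For $x>1$ one has $K'(x)=(\alpha-2)\int_{-1}^1 (x-y)^{\alpha-3}w(y)\,dy$, and the substitution $y=1-2u$ recasts the integral as an Euler integral equal to $2^{2-\alpha}(x-1)^{\alpha-3}B(\tfrac{3-\alpha}2,\tfrac{3-\alpha}2)\,{}_2F_1\!\left(3-\alpha,\tfrac{3-\alpha}2;3-\alpha;-\tfrac{2}{x-1}\right)$. Because the first upper and the lower parameter coincide, ${}_2F_1(A,B;A;z)=(1-z)^{-B}$ collapses the hypergeometric to a power, and the duplication formula turns $2^{2-\alpha}B(\tfrac{3-\alpha}2,\tfrac{3-\alpha}2)$ into $C_\alpha=\sqrt\pi\,\Gamma(\tfrac{3-\alpha}2)/\Gamma(\tfrac{4-\alpha}2)$, so that $K'(x)=(\alpha-2)C_\alpha(x^2-1)^{-\frac{3-\alpha}2}$. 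Integrating from the boundary value $K(1)=\frac{\pi}{\sin\frac{(\alpha-1)\pi}2}$ (legitimate since $K$ is continuous across $|x|=1$) and feeding the result through $I''=\alpha(\alpha-1)K$ reproduces, after two more integrations matched at $x=1$ to the interior quadratic, exactly the exterior term in \eqref{eq:compint}.

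The main obstacle is the remaining interior statement, that $K$ is \emph{constant} on $(-1,1)$: this is the delicate `constancy on the support', whose $x$-derivative there is a principal-value (finite Hilbert transform) integral whose vanishing is not elementary. I would establish it either by invoking the classical computation of Polya and Szeg\H{o} \cite{PoSz} referenced above, or—closer to the Fourier method of this paper—by computing $\widehat w(\xi)=\sqrt\pi\,\Gamma(\tfrac{3-\alpha}2)(2/|\xi|)^{1-\alpha/2}J_{1-\alpha/2}(|\xi|)$ from the cosine--Bessel table entry \eqref{eq:fourier1} and multiplying by the homogeneous transform $\widehat{|\cdot|^{\alpha-2}}\propto|\xi|^{1-\alpha}$; the product $\widehat K\propto|\xi|^{-\alpha/2}J_{1-\alpha/2}(|\xi|)$ is a Weber--Schafheitlin discontinuous integral, constant for $|x|<1$ and a power for $|x|>1$. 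The genuine difficulty of this second route is that for $2<\alpha<3$ the multiplier $|\xi|^{1-\alpha}$ is not locally integrable at the origin, so the product and its inversion must be justified by analytic continuation in $\alpha$ from a parameter range where all transforms are honest $L^1_{\rm loc}$ functions; the non-integrable singularity is precisely what produces the degree-two polynomial ambiguity that the explicit term $\alpha C_\alpha' x^2+C_\alpha'$ fills in.
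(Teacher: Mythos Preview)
Your plan is correct and the exterior computation via the Euler integral and the collapsing hypergeometric ${}_2F_1(A,B;A;z)=(1-z)^{-B}$ is a genuinely different and more elementary route than the paper's. The paper instead works one derivative further down: it proves directly that
\[
\int_{-1}^1 \sgn(x-y)\,|x-y|^{\alpha-3}(1-y^2)^{-\frac{\alpha-1}{2}}\,dy
=\begin{cases}0 & |x|<1,\\ C_\alpha(\sgn x)(x^2-1)^{-\frac{3-\alpha}{2}} & |x|>1,\end{cases}
\]
by taking Fourier transforms and invoking the Weber--Schafheitlin formula \eqref{eq:fourier2}, and then integrates three times. The advantage of descending to the third derivative is precisely the issue you flag: at that level the Fourier multiplier is $|\xi|^{2-\alpha}$, which is locally integrable for $2<\alpha<3$, so both the interior vanishing and the exterior power come out of a single honest Fourier computation with no analytic continuation. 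Your second-derivative Fourier route would work but, as you note, requires justifying the product against the non-integrable $|\xi|^{1-\alpha}$; the paper sidesteps this entirely. Conversely, your hypergeometric computation of $K'$ for $|x|>1$ avoids any appeal to Bessel or Weber--Schafheitlin tables, which is a real gain in self-containedness. One small correction: for $2<\alpha<3$ the integral $\int_{-1}^1\sgn(x-y)|x-y|^{\alpha-3}w(y)\,dy$ converges absolutely (the exponent $\alpha-3>-1$), so it is not a principal value; its vanishing on $(-1,1)$ is nontrivial but not for reasons of conditional convergence.
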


\begin{proof}
	\emph{Step 1.} We begin by proving that
	\begin{equation}
		\label{eq:int}
		\int_{-1}^1 (\sgn(x-y)) |x-y|^{-3+\alpha} (1-y^2)^{-\frac{\alpha-1}2}\,dy =
		\begin{cases}
			0 & \text{if}\ |x|< 1 \,,\\
			C_\alpha (\sgn x)(x^2 -1)^{-\frac{3-\alpha}2} & \text{if}\ |x|>1 \,.
		\end{cases}
	\end{equation}
	We note that three proofs of this formula in the case $|x|<1$ appear in \cite[Hilfssatz I]{PoSz}. We argue differently, using Fourier transforms, and also derive the formula for $|x|>1$. According to \cite[(17.23.26), (17.34.10)]{GrRy} we have
	$$
	(\sgn x) |x|^{-3+\alpha} = - \frac{i\sin\frac{(\alpha-2)\pi}2\ \Gamma(\alpha-2)}{\pi} \int_\R (\sgn\xi) |\xi|^{-\alpha+2} e^{i\xi x}\,d\xi
	$$
	and
	\begin{equation}
		\label{eq:fourier1}
		(1-x^2)^{-\frac{\alpha-1}2}\1(|x|<1) = \frac{2^{-\frac{\alpha-2}2} \Gamma(\frac{3-\alpha}2)}{\sqrt\pi} \int_0^\infty \xi^{\frac{\alpha-2}2} J_{-\frac{\alpha-2}2}(\xi)\cos(\xi x) \,d\xi \,.
	\end{equation}
	Thus,
	\begin{align*}
		& \int_{-1}^1 (\sgn(x-y)) |x-y|^{-3+\alpha} (1-y^2)^{-\frac{\alpha-1}2}\,dy \\
		& = \frac{\sin\frac{(\alpha-2)\pi}2\ \Gamma(\alpha-2)\
		2^{\frac{4-\alpha}2}\ \Gamma(\frac{3-\alpha}2)}{\sqrt\pi}
		\int_0^\infty \xi^{-\frac{\alpha-2}2} J_{-\frac{\alpha-2}2}(\xi)\sin(\xi x)\,d\xi \,.
	\end{align*}
	Finally, according to \cite[(6.699.5)]{GrRy}
	\begin{equation}
		\label{eq:fourier2}
		\int_0^\infty \xi^{-\frac{\alpha-2}2} J_{-\frac{\alpha-2}2}(\xi)\sin(\xi x)\,d\xi
		= \begin{cases}
			0 & \text{if}\ 0<x<1 \,,\\
			\frac{\sqrt\pi 2^{-\frac{\alpha-2}2}}{\Gamma(\frac{\alpha-1}2)} (x^2-1)^{-\frac{3-\alpha}2}
			& \text{if}\ x>1 \,.
		\end{cases}
	\end{equation}
	This proves the claimed formula \eqref{eq:int} with the constant
	$$
	C_\alpha = \frac{2^{3-\alpha} \sin\frac{(\alpha-2)\pi}2\ \Gamma(\alpha-2)\ \Gamma(\frac{3-\alpha}2)}{\Gamma(\frac{\alpha-1}2)} \,.
	$$
	By Legendre's duplication formula and Euler's reflection formula, respectively,
	$$
	\frac{\Gamma(\alpha-2)}{\Gamma(\frac{\alpha-1}2)} = \frac{\Gamma(\frac{\alpha-2}2)}{2^{3-\alpha}\sqrt\pi}
	\quad\text{and}\quad
	\Gamma(\tfrac{\alpha-2}2) = \frac{\pi}{\sin\frac{(\alpha-2)\pi}2\ \Gamma(\frac{4-\alpha}2)} \,.
	$$
	Using these formulas we can bring $C_\alpha$ into the claimed form.
	
	\medskip
	
	\emph{Step 2.} We now show that the formula in the lemma follows from \eqref{eq:int} by triple integration. Indeed, integrating \eqref{eq:int} with respect to $x$ yields
	\begin{align}
		\label{eq:int1}
		& \int_{-1}^1 |x-y|^{\alpha-2} (1-y^2)^{-\frac{\alpha-1}2}\,dy \notag \\
		& = c_\alpha + 
		\begin{cases}
			0 & \text{if}\ |x|< 1 \,,\\
			(\alpha-2) C_\alpha \int_1^{|x|} (y^2 -1)^{-\frac{3-\alpha}2}\,dy & \text{if}\ |x|>1 \,,
		\end{cases}
	\end{align}
	with
	\begin{align*}
		c_\alpha & := \int_{-1}^1 |y|^{\alpha-2} (1-y^2)^{-\frac{\alpha-1}2}\,dy = \int_0^1 t^{(\alpha-3)/2} (1-t)^{-\frac{\alpha-1}2}\,dt = \Gamma(\tfrac{\alpha-1}2)\,\Gamma(\tfrac{3-\alpha}2) \\
		& = \frac{\pi}{\sin\frac{(\alpha-1)\pi}2} \,.
	\end{align*}
	Here we changed variables $y^2 =t$, expressed the beta function in terms of gamma functions and used Euler's reflection formula for the gamma function. Integration of \eqref{eq:int1} with respect to $x$ shows that
	\begin{align}
		\label{eq:int2}
		& \int_{-1}^1 (\sgn(x-y)) |x-y|^{\alpha-1} (1-y^2)^{-\frac{\alpha-1}2}\,dy \notag \\
		& = (\alpha-1) c_\alpha x +
		\begin{cases}
			0 & \text{if}\ |x|< 1 \,,\\
			(\alpha-1)(\alpha-2) C_\alpha (\sgn x) \int_1^{|x|} \int_1^{|y|} (z^2 -1)^{-\frac{3-\alpha}2}\,dz\,dy & \text{if}\ |x|>1 \,.
		\end{cases}
	\end{align}
	No additional integration constant appears since the left side is an odd function of $x$. The double integral on the right side of \eqref{eq:int2} equals
	$$
	\int_1^{|x|} \int_1^{|y|} (z^2 -1)^{-\frac{3-\alpha}2}\,dz\,dy = \int_1^{|x|} (z^2-1)^{-\frac{3-\alpha}2} (|x|-z)\,dz \,.
	$$
	One final integration with respect to $x$ shows that
	\begin{align}
		\label{eq:int3}
		& \int_{-1}^1 |x-y|^{\alpha} (1-y^2)^{-\frac{\alpha-1}2}\,dy = \frac{\alpha(\alpha-1)}2 c_\alpha x^2 + C_\alpha' \notag \\
		& +
		\begin{cases}
			0 & \text{if}\ |x|< 1 \,,\\
			\alpha(\alpha-1)(\alpha-2) C_\alpha \int_1^{|x|} \int_1^{|y|} (z^2-1)^{-\frac{3-\alpha}2} (|y|-z)\,dz\,dy & \text{if}\ |x|>1 \,,
		\end{cases}
	\end{align}
	where
	\begin{align*}
		C_\alpha' & = \int_{-1}^1 |y|^\alpha (1-y^2)^{-\frac{\alpha-1}2}\,dy = \int_0^1 t^{\frac{\alpha-1}2} (1-t)^{-\frac{\alpha-1}2}\,dt 
		= \Gamma(\tfrac{\alpha+1}2)\,\Gamma(\tfrac{3-\alpha}2) \\
		& = \tfrac{\alpha-1}2\,  \Gamma(\tfrac{\alpha-1}2)\,\Gamma(\tfrac{3-\alpha}2)
		= \frac{\frac{(\alpha-1)\pi}2}{\sin\frac{(\alpha-1)\pi}2}\,.
	\end{align*}
	The double integral on the right side of \eqref{eq:int3} equals
	$$
	\int_1^{|x|} \int_1^{|y|} (z^2-1)^{-\frac{3-\alpha}2} (|y|-z)\,dz\,dy = \frac12 \int_1^{|x|} (z^2-1)^{-\frac{3-\alpha}2} (|y|-z)^2\,dz \,.
	$$
	This completes the proof of \eqref{eq:compint}.
\end{proof}

\begin{corollary}\label{cor}
	Let $2<\alpha<3$ and let $R_\alpha$ be defined by \eqref{eq:ralpha}. Then the measure
	$$
	d\mu(x) = C_\alpha^{-1} R_\alpha^{\alpha-2} (R_\alpha^2-x^2)^{-\frac{\alpha-1}2} \1(|x|<R_\alpha)\,dx
	$$
	satisfies the assumptions of Lemma \ref{lopes} with
	$$
	\eta =  - \frac{\alpha-2}{\alpha(4-\alpha)}\, R_\alpha^2 \,.
	$$
\end{corollary}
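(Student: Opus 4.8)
The plan is to verify the two conditions in \eqref{eq:lopesass} by computing $\phi_\alpha$ in closed form, reducing everything to Lemma~\ref{comp} after rescaling from $(-1,1)$ to $(-R_\alpha,R_\alpha)$. First I would record the moments of $\mu$. The substitution $y=R_\alpha s$ turns the normalization integral into a Beta integral, $\int_{-1}^1(1-s^2)^{-(\alpha-1)/2}\,ds = \Gamma(\tfrac12)\Gamma(\tfrac{3-\alpha}2)/\Gamma(\tfrac{4-\alpha}2)=C_\alpha$, which confirms that the stated $C_\alpha$ normalizes $\mu$ to a probability measure. The same substitution gives $\int_\R y^2\,d\mu(y) = R_\alpha^2/(4-\alpha)$ (another Beta integral, after cancelling $C_\alpha$), while $\int_\R y\,d\mu(y)=0$ by symmetry. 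Hence the quadratic part contributes, for every $x\in\R$,
$$
\int_\R 2^{-1}|x-y|^2\,d\mu(y) = \frac{x^2}{2} + \frac{R_\alpha^2}{2(4-\alpha)} \,.
$$

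Next I would treat the $|x-y|^\alpha$ term. Writing $x=R_\alpha u$, $y=R_\alpha s$ and using $|x-y|^\alpha = R_\alpha^\alpha|u-s|^\alpha$ together with $d\mu(y)=C_\alpha^{-1}R_\alpha^{\alpha-2}(R_\alpha^2(1-s^2))^{-(\alpha-1)/2}R_\alpha\,ds$, Lemma~\ref{comp} evaluates $\int_\R \alpha^{-1}|x-y|^\alpha\,d\mu(y)$ explicitly. For $|x|<R_\alpha$ (that is, $|u|<1$) the tail in \eqref{eq:compint} is absent, and after simplification one obtains the pure quadratic $C_\alpha^{-1}R_\alpha^{\alpha-2}C_\alpha'\,x^2 + (2\alpha)^{-1}R_\alpha^2$. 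The decisive point is that the coefficient of $x^2$ in $\phi_\alpha$ is then $C_\alpha^{-1}R_\alpha^{\alpha-2}C_\alpha' - \tfrac12$; requiring this to vanish is exactly the relation $R_\alpha^{\alpha-2}=C_\alpha/(2C_\alpha')$, and a short computation with \eqref{eq:calpha} shows that this coincides with the definition \eqref{eq:ralpha} of $R_\alpha$. With the $x^2$ terms cancelling, $\phi_\alpha$ is constant on $[-R_\alpha,R_\alpha]=\supp\mu$, and collecting constants gives
$$
\eta = \frac{R_\alpha^2}{2\alpha} - \frac{R_\alpha^2}{2(4-\alpha)} = -\frac{\alpha-2}{\alpha(4-\alpha)}\,R_\alpha^2 \,,
$$
which is the asserted value and establishes the equality in \eqref{eq:lopesass}.

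Finally I would check the inequality $\phi_\alpha\geq\eta$ off the support. For $|x|>R_\alpha$ the only new contribution compared with the previous computation is the tail term of \eqref{eq:compint}, which after rescaling equals $\tfrac{(\alpha-1)(\alpha-2)}2 R_\alpha^\alpha\int_1^{|x|/R_\alpha}(s^2-1)^{-(3-\alpha)/2}(|x|/R_\alpha-s)^2\,ds$ (the factors $\alpha^{-1}$ and $\alpha$, and $C_\alpha^{-1}$ and $C_\alpha$, cancel). Since $2<\alpha<3$, every factor is positive, so this term is strictly positive and $\phi_\alpha(x)>\eta$ for $|x|>R_\alpha$. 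Together with the constancy on $\supp\mu$, this verifies both hypotheses of Lemma~\ref{lopes}.

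The hard part will be the constancy condition: matching $R_\alpha^{\alpha-2}=C_\alpha/(2C_\alpha')$ with the definition \eqref{eq:ralpha} is where the particular value of $R_\alpha$ is forced, and it is the sole mechanism making $\phi_\alpha$ constant on the support. Once Lemma~\ref{comp} is available, the inequality off the support comes essentially for free, because the sign of the tail term is governed entirely by $2<\alpha<3$.
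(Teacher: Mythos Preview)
Your proposal is correct and follows essentially the same route as the paper: rescale Lemma~\ref{comp} from $(-1,1)$ to $(-R_\alpha,R_\alpha)$, use the relation $R_\alpha^{\alpha-2}=C_\alpha/(2C_\alpha')$ to cancel the $x^2$ coefficients, read off $\eta$ from the remaining constants, and observe that the tail term in \eqref{eq:compint} is nonnegative for $2<\alpha<3$. The only cosmetic difference is that you compute the second moment $\int y^2\,d\mu = R_\alpha^2/(4-\alpha)$ directly, whereas the paper carries the constant $\tilde C_\alpha$ and simplifies $\tilde C_\alpha C_\alpha^{-1}=(4-\alpha)^{-1}$ at the end; also, you invoke the relation $R_\alpha^{\alpha-2}=C_\alpha/(2C_\alpha')$ once in simplifying the constant term to $(2\alpha)^{-1}R_\alpha^2$ before formally stating it, which is harmless but worth making explicit.
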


\begin{proof}
	By Lemma \ref{comp} and scaling one has, for any $R>0$,
	\begin{align*}
		\frac1\alpha \int_{-R}^R |x-y|^\alpha \left( R^2-y^2 \right)^{-\frac{\alpha-1}2}dy = C_\alpha' x^2 + \alpha^{-1} C_\alpha' R^2 + R^2 f(x/R)
	\end{align*}
	with
	\begin{align*}
		f(x) & :=
		\begin{cases}
			0 & \text{if}\ |x|\leq 1 \,,\\
			\frac{(\alpha-1)(\alpha-2)}{2} C_\alpha \int_1^{|x|} (y^2-1)^{-\frac{3-\alpha}2} (|x|-y)^2\,dy & \text{if}\ |x|>1 \,.
		\end{cases}
	\end{align*}
	Moreover,
	\begin{align*}
		\frac12 \int_{-R}^R |x-y|^2 \left( R^2-y^2 \right)^{-\frac{\alpha-1}2}dy = \frac12 C_\alpha R^{-\alpha+2} x^2 + \frac12 \tilde C_\alpha R^{4-\alpha}
	\end{align*}
	with
	\begin{align*}
		\tilde C_\alpha = \int_{-1}^1 y^2 (1-y^2)^{-(\alpha-)/2}\,dy = \int_0^1 \sqrt t (1-t)^{-\frac{\alpha-1}2}\,dt = \frac{\Gamma(\frac32)\Gamma(\frac{3-\alpha}2)}{\Gamma(\frac{6-\alpha}2)} = \frac{\sqrt\pi\ \Gamma(\frac{3-\alpha}2)}{2\,\Gamma(\frac{6-\alpha}2)}
	\end{align*}
	and where we used the fact that, by a similar computation,
	$$
	\int_{-1}^1 (1-y^2)^{-\frac{\alpha-1}2}\,dy = C_\alpha \,.
	$$
	Choosing $R=R_\alpha$ and noting that $R_\alpha^{\alpha-2}= C_\alpha/ (2\,C_\alpha')$, we see that the coefficients of $x^2$ coincide and we obtain
	\begin{align*}
		& \int_{-R_\alpha}^{R_\alpha} \left( \alpha^{-1}|x-y|^\alpha - 2^{-1} |x-y|^2\right)\left( R_\alpha^2 -y^2 \right)^{-\frac{\alpha-1}2}dy \\
		& = - \left( 2^{-1} \tilde C_\alpha R_\alpha^{4-\alpha} - \alpha^{-1} C_\alpha' R_\alpha^2 \right) + R_\alpha^2 f(x/R_\alpha^2) \,.
	\end{align*}
	Since $\int_{-R_\alpha}^{R_\alpha} (R_\alpha^2 - y^2)^{-\frac{\alpha-1}2}\,dy = C_\alpha R_\alpha^{-\alpha+2}$, we see that $\mu\in P(\R)$. Since $f\geq 0$ with equality for $|x|\leq 1$ we see that $\mu$ satisfies the assumptions of Lemma \ref{lopes}. The constant appearing there is
	$$
	\eta = - \left( 2^{-1} \tilde C_\alpha R_\alpha^{4-\alpha} - \alpha^{-1} C_\alpha' R_\alpha^2 \right) C_\alpha^{-1} R_\alpha^{\alpha-2} = - \left( 2^{-1} \tilde C_\alpha - \alpha^{-1} C_\alpha' R_\alpha^{\alpha-2} \right) C_\alpha^{-1} R_\alpha^2 \,.
	$$
	Inserting first the definition of $R_\alpha$ and then the explicit form of $C_\alpha$ and $\tilde C_\alpha$ gives
	\begin{align*}
		& \left( 2^{-1} \tilde C_\alpha - \alpha^{-1} C_\alpha' R_\alpha^{\alpha-2} \right) C_\alpha^{-1}
		= 2^{-1} \left(  \tilde C_\alpha C_\alpha^{-1} - \alpha^{-1} \right) = \frac12 \left( (4-\alpha)^{-1} - \alpha^{-1} \right) \\
		& = \frac{\alpha - 2}{\alpha(4-\alpha)} \,.
	\end{align*}
	This completes the proof.	
\end{proof}

Of course, Theorem \ref{main} is an immediate consequence of Lemma \ref{lopes} and Corollary \ref{cor}.

%%%%%%%%%%%%%%%%%%

\section{Proof of Theorem \ref{main2}}

Since the proof of Theorem \ref{main2} is similar to that of Theorem \ref{main}, we mostly focus on the differences. The analogue of Lemma \ref{lopes} reads as follows.

\begin{lemma}\label{riesz}
	Let $-1<\alpha<2$ and assume that there are $\mu\in P(\R)$ and $\eta\in\R$ such that
	$$
	\phi_\alpha(x) := \int_\R \left( 2^{-1} |x-y|^2 - \alpha^{-1} |x-y|^\alpha \right)d\mu(y) \,,
	\qquad x\in\R \,,
	$$
	satisfies
	\begin{equation*}
		\phi_\alpha\geq \eta
		\quad\text{on}\ \R
		\qquad\text{and}\qquad
		\phi_\alpha = \eta
		\quad\text{on}\ \supp\mu \,.
	\end{equation*}
	Then $\mu$ is the unique (up to translations) minimizer for $E_\alpha$ and $\eta=2E_\alpha$.
\end{lemma}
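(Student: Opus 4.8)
The plan is to follow the proof of Lemma~\ref{lopes} essentially verbatim, since the only structural change is the overall sign of the $|x-y|^\alpha$ term and a corresponding change in the definiteness input. First I would reduce, exactly as before, to the convexity argument: given $\tilde\mu\in P(\R)$ with $\mathcal E_\alpha[\tilde\mu]<+\infty$, normalized (by translation invariance, which leaves $\mathcal E_\alpha$ unchanged) so that both $\mu$ and $\tilde\mu$ have vanishing center of mass, I set $\phi(\theta):=\mathcal E_\alpha[(1-\theta)\mu+\theta\tilde\mu]$ for $\theta\in[0,1]$ and aim to show (a) $\phi'(0)\geq 0$ and (b) $\phi''>0$ on $[0,1]$ whenever $\nu:=\tilde\mu-\mu\neq 0$. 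The identity
$$
\phi(1)-\phi(0) = \phi'(0) + \int_0^1 (1-t)\,\phi''(t)\,dt
$$
then yields the strict inequality $\mathcal E_\alpha[\tilde\mu]>\mathcal E_\alpha[\mu]$ for $\tilde\mu\neq\mu$, giving minimality, uniqueness up to translation, and $\eta=2E_\alpha$.

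The proof of (a) is word-for-word the same as in Lemma~\ref{lopes}: writing $\phi'(0)=\int_\R\phi_\alpha\,d(\tilde\mu-\mu)$ and invoking the two hypotheses ($\phi_\alpha\geq\eta$ everywhere and $\phi_\alpha=\eta$ on $\supp\mu$) gives $\phi'(0)\geq\eta-\eta=0$. For (b) I would compute, just as before,
$$
\phi''(\theta)=\iint_{\R\times\R}\left(2^{-1}|x-y|^2-\alpha^{-1}|x-y|^\alpha\right)d\nu(x)\,d\nu(y).
$$
Expanding $|x-y|^2=x^2-2xy+y^2$ and using that $\nu$ has vanishing mass \emph{and} vanishing center of mass (the latter being exactly what is needed to kill the quadratic contribution, which otherwise equals $-2(\int x\,d\nu)^2$), one is left with
$$
\phi''(\theta)=-\alpha^{-1}\iint_{\R\times\R}|x-y|^\alpha\,d\nu(x)\,d\nu(y).
$$

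The crux---and the place where this lemma genuinely differs from Lemma~\ref{lopes}---is the sign of this double integral, which I expect to be the main obstacle. Here I cannot invoke \cite[Theorem 2.4]{Lo}, since that result concerns the range $2<\alpha<4$ and crucially uses the vanishing of the center of mass; instead the relevant input is the classical (conditional) definiteness of the one-dimensional Riesz kernel, whose character changes at $\alpha=0$. Concretely, for $0<\alpha<2$ the kernel $|x|^\alpha$ is \emph{conditionally negative definite}, so $\iint|x-y|^\alpha\,d\nu\,d\nu\leq 0$ for every signed measure $\nu$ with $\nu(\R)=0$, with equality only if $\nu=0$; since $-\alpha^{-1}<0$ this gives $\phi''(\theta)\geq 0$, strictly if $\nu\neq 0$. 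For $-1<\alpha<0$ the kernel $|x|^\alpha$ is instead \emph{positive definite}, so $\iint|x-y|^\alpha\,d\nu\,d\nu\geq 0$ with equality only if $\nu=0$; since now $-\alpha^{-1}>0$, again $\phi''(\theta)\geq 0$ strictly. The borderline case $\alpha=0$, where $\alpha^{-1}|x-y|^\alpha$ is read as $\ln|x-y|$, is covered by the conditional negative definiteness of the logarithmic kernel, giving $\phi''(\theta)=-\iint\ln|x-y|\,d\nu\,d\nu\geq 0$. All three definiteness statements follow from Fourier analysis: the Fourier transform of $|x|^\alpha$ is a constant multiple of $|\xi|^{-1-\alpha}$ whose sign flips at $\alpha=0$, and the condition $\nu(\R)=0$ makes $\hat\nu$ vanish at the origin to the order needed for the resulting integral $\int|\hat\nu(\xi)|^2|\xi|^{-1-\alpha}\,d\xi$ to converge in the conditionally definite range. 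Assembling these cases closes the argument exactly as for Lemma~\ref{lopes}.
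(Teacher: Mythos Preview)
Your proposal is correct and follows essentially the same route as the paper: reduce to the convexity argument of Lemma~\ref{lopes}, compute $\phi''(\theta)=-\alpha^{-1}\iint|x-y|^\alpha\,d\nu\,d\nu$ after eliminating the quadratic term via vanishing mass and center of mass, and then split into the cases $-1<\alpha<0$ (positive definiteness of $|x|^\alpha$) and $0\le\alpha<2$ (conditional negative definiteness of $|x|^\alpha$, resp.\ of $\ln|x|$ at $\alpha=0$) to obtain strict positivity of $\phi''$ for $\nu\neq 0$. The paper's proof is the same, only sketchier.
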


\begin{proof}
	The proof is rather similar to that of Lemma \ref{lopes}, except that the argument that $\phi''>0$ is more standard. Indeed, in the notation of the previous proof, we find
	$$
	\phi''(\theta) = -\frac{1}{\alpha} \iint_{\R\times\R} |x-y|^\alpha\,d\nu(x)\,d\nu(y) \,.
	$$
	For $-1<\alpha<0$ we use the fact that the Fourier transform of $|x-y|^\alpha$ is positive definite. For $0\leq\alpha<2$ (recall that we interpret $\alpha^{-1}|x-y|^\alpha$ as $\ln|x-y|$ for $\alpha=0$) we use the fact that the Fourier transform of $-|x-y|^\alpha$ is positive definite when restricted to signed measures with vanishing integral. This allows one to conclude the proof as before.
\end{proof}

\begin{lemma}\label{comp2}
	Formula \eqref{eq:compint} holds for $1<\alpha\leq 2$. Moreover, for $-1<\alpha< 2$ we have
	\begin{align*}
		& \int_{-1}^1 |x-y|^{\alpha} (1-y^2)^{-\frac{\alpha-1}2}\,dy = \alpha C_\alpha' x^2 + C_\alpha' \\
		& -
		\begin{cases}
			0 & \text{if}\ |x|\leq 1 \,,\\
			\alpha D_\alpha (|x|-1)^2 + \frac{\alpha(\alpha-1)(\alpha-2)}{2} C_\alpha \int_1^{|x|} \int_1^{|y|} \int_{|z|}^\infty (w^2 -1)^{-\frac{3-\alpha}2}\,dw\,dz\,dy & \text{if}\ |x|>1 \,,
		\end{cases}
	\end{align*}
with
$$
D_\alpha := C_\alpha\, \frac{\Gamma(\frac{\alpha+1}2)\,\Gamma(\frac{4-\alpha}2)}{\sqrt\pi} \,.
$$
\end{lemma}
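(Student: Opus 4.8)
The plan is to derive both statements by analytic continuation in $\alpha$ from the range $2<\alpha<3$ already settled in Lemma~\ref{comp}; the only genuine work lies in rewriting the ``excess'' term (the $|x|>1$ case) in a form that remains convergent as $\alpha$ drops below $1$.

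For the first claim I would observe that, for each fixed real $x$, both sides of \eqref{eq:compint} are holomorphic in $\alpha$ on the strip $1<\re\alpha<3$. On the left, $|x-y|^\alpha(1-y^2)^{-(\alpha-1)/2}$ is holomorphic in $\alpha$ and locally dominated there (the weight is integrable for $\re\alpha<3$ and $|x-y|^{\re\alpha}$ is harmless for $\re\alpha>0$), so one may differentiate under the integral sign. On the right, $C_\alpha$ and $C_\alpha'$ are holomorphic for $\re\alpha<3$ (the zero of the numerator of $C_\alpha'$ at $\alpha=1$ is removable), and $\int_1^{|x|}(y^2-1)^{-(3-\alpha)/2}(|x|-y)^2\,dy$ converges at $y=1$ exactly when $\re\alpha>1$, depending holomorphically on $\alpha$ there. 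Since Lemma~\ref{comp} gives equality on $(2,3)$, the identity theorem extends \eqref{eq:compint} to the whole strip, hence to $1<\alpha\le 2$ (the endpoint $\alpha=2$ being in any case immediate, as $|x-y|^2$ is a polynomial).

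For the second claim the difficulty is precisely that the integral in \eqref{eq:compint} diverges at $y=1$ once $\alpha\le 1$. To repair this I would re-run the threefold integration that produced \eqref{eq:int3} out of \eqref{eq:int}, but anchoring each antiderivative at $+\infty$ rather than at $1$: the tail integral $\int_{|z|}^\infty(w^2-1)^{-(3-\alpha)/2}\,dw$ converges for $\re\alpha<2$, whereas the primitive based at $1$ requires $\re\alpha>1$. Equivalently, writing $s=|x|\ge1$, on the overlap $1<\alpha<2$ one verifies the elementary identity
\begin{equation*}
\tfrac12\int_1^{s}(y^2-1)^{-\frac{3-\alpha}2}(s-y)^2\,dy + \int_1^{s}\int_1^{y}\int_{z}^\infty (w^2-1)^{-\frac{3-\alpha}2}\,dw\,dz\,dy = \tfrac{c_0}{2}\,(s-1)^2 ,
\end{equation*}
where $c_0:=\int_1^\infty(w^2-1)^{-(3-\alpha)/2}\,dw$: both sides vanish to second order at $s=1$, and the third derivative of the left side vanishes identically because the two primitives contribute $+h$ and $-h$ with $h:=(s^2-1)^{-(3-\alpha)/2}$, so the difference is a quadratic with a triple zero at $s=1$. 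Inserting this into \eqref{eq:compint} converts the excess into the tail form plus a boundary term proportional to $(|x|-1)^2$; its coefficient is $\alpha D_\alpha$, and the constant $c_0$ is the beta integral $c_0=\Gamma(\tfrac{2-\alpha}2)\Gamma(\tfrac{\alpha-1}2)/(2\sqrt\pi)$, which together with the recursion $z\,\Gamma(z)=\Gamma(z+1)$ reproduces exactly $D_\alpha=C_\alpha\,\Gamma(\tfrac{\alpha+1}2)\Gamma(\tfrac{4-\alpha}2)/\sqrt\pi$.

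Finally I would close the continuation. The recast right-hand side is holomorphic in $\alpha$ on $-1<\re\alpha<2$: the iterated tail integral converges at $w=\infty$ because $\re\alpha<2$ and at the joint endpoint $w=z=1$ because the $z,w$ double integration stays integrable once $\re\alpha>-1$, while $D_\alpha$ is holomorphic there. The left side is holomorphic on $-1<\re\alpha<3$ (near $y=x$ the singularity $|x-y|^{\re\alpha}$ is integrable for $\re\alpha>-1$), and by the previous paragraph the two agree on $(1,2)$, so the identity theorem propagates the equality to all of $-1<\alpha<2$. The main obstacle is the bookkeeping of the middle step: tracking which integration constant stays finite and which diverges as $\alpha$ crosses $1$, and confirming that the single surviving boundary term carries precisely the coefficient $\alpha D_\alpha$; the analyticity and continuation arguments themselves are routine.
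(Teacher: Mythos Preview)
Your proposal is correct and follows the same analytic-continuation strategy as the paper: extend \eqref{eq:compint} to $1<\alpha\le 2$ by analyticity, then rewrite the excess term using the tail primitive $\int_{|z|}^\infty(w^2-1)^{-(3-\alpha)/2}\,dw$ (which converges for $\re\alpha<2$) in place of the one based at $1$, and continue down to $-1<\alpha<2$. The paper carries this out by re-running the successive integrations \eqref{eq:int1}--\eqref{eq:int3} with the split $\int_1^{|x|}=\int_1^\infty-\int_{|x|}^\infty$ inserted at the first step; you instead compress the conversion into a single identity at the top level and verify it by showing the third derivative vanishes. Both routes produce the same boundary term $-\alpha D_\alpha(|x|-1)^2$ via the same beta-integral evaluation of $\int_1^\infty(w^2-1)^{-(3-\alpha)/2}\,dw$.

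One small gap to close: saying that ``both sides vanish to second order at $s=1$'' together with the vanishing third derivative only shows the difference is a multiple of $(s-1)^2$, not that it has a \emph{triple} zero. You still need to check that the second derivatives agree at $s=1$, i.e.\ that the second derivative of your left side at $s=1$ equals $c_0$; this is immediate since the first summand contributes $\int_1^1 h=0$ and the second contributes $\int_1^\infty h=c_0$, but it should be said explicitly, as this is precisely what pins down the value of $c_0$ and hence of $D_\alpha$.
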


\begin{proof}
	The first assertion follows easily by analytic continuation, since for fixed $x\in\R$ both sides of \eqref{eq:compint} are analytic in $\alpha$ in an open set in the complex plane containing $\{1<\alpha<3\}$. The restriction here to $\alpha>1$ comes from the integral on the right side and its converges near $y=1$. To prove the second assertion in the lemma we will construct an analytic continuation of that integral. To do so, we review the second step of the proof of Lemma \ref{comp}. The same analytic continuation argument shows that \eqref{eq:int1} holds for $1<\alpha<3$. Restricting ourselves to $1<\alpha<2$, we can rewrite \eqref{eq:int1} as
	\begin{align}
		\label{eq:int1a}
		& \int_{-1}^1 |x-y|^{\alpha-2} (1-y^2)^{-\frac{\alpha-1}2}\,dy \notag \\
		& = c_\alpha +
		\begin{cases}
			0 & \text{if}\ |x|< 1 \,,\\
			c_\alpha^{(1)} - (\alpha-2)\, C_\alpha\, \int_{|x|}^\infty (y^2 -1)^{-\frac{3-\alpha}2}\,dy & \text{if}\ |x|>1 \,,
		\end{cases}
	\end{align}
	with
	\begin{align*}
		c_\alpha^{(1)} & := (\alpha-2) C_\alpha \int_1^\infty (y^2 -1)^{-\frac{3-\alpha}2}\,dy = \frac{(\alpha-2)C_\alpha}2 \int_0^1 (1-s)^{-\frac{3-\alpha}2} s^{-\alpha/2}\,ds \\ 
		& = \frac{(\alpha-2) C_\alpha}{2}\, \frac{\Gamma(\frac{\alpha-1}2)\Gamma((2-\alpha)/2)}{\sqrt{\pi}} = - C_\alpha \, \frac{\Gamma(\frac{\alpha-1}2)\Gamma(\frac{4-\alpha}2)}{\sqrt{\pi}} \,.
	\end{align*}
	We integrate \eqref{eq:int1a} with respect to $x$ and obtain
	\begin{align}
		\label{eq:int2a}
		& \int_{-1}^1 (\sgn(x-y)) |x-y|^{\alpha-1} (1-y^2)^{-\frac{\alpha-1}2}\,dy = (\alpha-1) c_\alpha x \notag \\
		& +
		\begin{cases}
			0 & \text{if}\ |x|< 1 \,,\\
			(\alpha-1)\, c_\alpha^{(1)}\, (\sgn x)\,(|x|-1) & \\
			\qquad - (\alpha-1)(\alpha-2)\, C_\alpha\, (\sgn x)\, \int_1^{|x|} \int_{|y|}^\infty (z^2 -1)^{-\frac{3-\alpha}2}\,dz\,dy & \text{if}\ |x|>1 \,.
		\end{cases}
	\end{align}
	The important observation now is that
	$$
	(\alpha-1)\, c_\alpha^{(1)} = - 2\, C_\alpha \, \frac{\Gamma(\frac{\alpha+1}2)\Gamma(\frac{4-\alpha}2)}{\sqrt{\pi}}
	$$
	is analytic in complex open set containing $-1<\alpha<4$. Thus, formula \eqref{eq:int2a} holds at least for $0<\alpha<2$. (We restrict ourselves here to $\alpha>0$ so that the integral on the left side converges absolutely.) We also note that $\int_{|y|}^\infty (z^2-1)^{-\frac{3-\alpha}2}\,dz$ behaves like a constant times $(|y|-1)^{-(1-\alpha)/2}$ as $|y|\to 1$ and therefore it is integrable near $|y|=1$ as long as $\alpha>-1$.
	
	Integrating \eqref{eq:int2a} with respect to $x$ we obtain
		\begin{align}
		\label{eq:int3a}
		& \int_{-1}^1 |x-y|^{\alpha} (1-y^2)^{-\frac{\alpha-1}2}\,dy = \frac{\alpha(\alpha-1)}2 c_\alpha x^2 + C_\alpha' \notag \\
		& =
		\begin{cases}
			0 & \text{if}\ |x|< 1 \,,\\
			\frac{\alpha (\alpha-1)}2\, c_\alpha^{(1)}\, (|x|-1)^2 & \\
			\qquad - \alpha (\alpha-1)(\alpha-2)\, C_\alpha\, \int_1^{|x|} \int_1^{|y|} \int_{|z|}^\infty (w^2 -1)^{-\frac{3-\alpha}2}\,dw\,dz\,dy & \text{if}\ |x|>1 \,.
		\end{cases}
	\end{align}
	This formula, which we derived under the assumption $0<\alpha<2$ extends, by analytic continuation to $-1<\alpha<2$. This completes the proof of the lemma.	
\end{proof}

\begin{remark}
	There is an partially alternate proof of Lemma \ref{comp2}, which proceeds by verifying the claimed formulas using Fourier transforms in the spirit of our proof of Lemma \ref{comp}. More precisely, one verifies \eqref{eq:int3a} for $-1<\alpha<0$, \eqref{eq:int2a} for $0<\alpha<1$ and \eqref{eq:int1a} for $1<\alpha<2$. (In these cases the Fourier transform of the convolution kernels is welldefined without the need of analytic continuation.) The relevant formulas are \cite[(6.699.1) and (6.699.2)]{GrRy}. The disadvantage of such a proof is that the `remainder terms' are expressed as hypergeometric functions and one needs some of their properties. For this reason we chose the above somewhat lengthy, but elementary proof.
\end{remark}

\begin{corollary}
	Let $-1<\alpha<2$ and define $R_\alpha$ and $\mu$ as in Corollary \ref{cor}. Then $\mu$ satisfies the assumptions of Lemma \ref{riesz} with
	$$
	\eta =  - \frac{2-\alpha}{\alpha(4-\alpha)}\, R_\alpha^2 \,.
	$$
\end{corollary}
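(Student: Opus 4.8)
The plan is to follow the proof of Corollary~\ref{cor} almost verbatim. The only structural changes are that the two terms in $\phi_\alpha$ now appear with the opposite signs (the power term $-\alpha^{-1}|x-y|^\alpha$ carries the minus sign) and that the remainder must be controlled through Lemma~\ref{comp2} rather than Lemma~\ref{comp}. As in Corollary~\ref{cor}, the whole point is to choose the radius so that the quadratic parts of the two contributions to $\phi_\alpha$ cancel, leaving a constant plus a one-signed remainder.

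First I would record the two one-dimensional integrals. The elementary identity $\tfrac12\int_{-R}^R |x-y|^2(R^2-y^2)^{-\frac{\alpha-1}2}\,dy = \tfrac12 C_\alpha R^{2-\alpha}x^2 + \tfrac12\tilde C_\alpha R^{4-\alpha}$, together with the beta-integral evaluations of $\tilde C_\alpha$ and of the normalization $\int_{-1}^1(1-y^2)^{-\frac{\alpha-1}2}\,dy = C_\alpha$, is valid for all $-1<\alpha<3$, hence carries over unchanged; in particular $\mu\in P(\R)$ for the same reason as before, and $R_\alpha^{\alpha-2}=C_\alpha/(2C_\alpha')$. For the power term I would apply Lemma~\ref{comp2} and scaling to write $\tfrac1\alpha\int_{-R}^R |x-y|^\alpha(R^2-y^2)^{-\frac{\alpha-1}2}\,dy = C_\alpha' x^2 + \alpha^{-1}C_\alpha' R^2 - \alpha^{-1}R^2 G(x/R)$, where $G$ is the remainder from Lemma~\ref{comp2} (vanishing for $|x|\le1$). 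Forming $\phi_\alpha$ and choosing $R=R_\alpha$ so that the coefficients of $x^2$ cancel exactly as in Corollary~\ref{cor}, I obtain $\phi_\alpha(x) = C_\alpha^{-1}R_\alpha^{\alpha-2}\bigl(\tfrac12\tilde C_\alpha R_\alpha^{4-\alpha} - \alpha^{-1}C_\alpha' R_\alpha^2\bigr) + C_\alpha^{-1}R_\alpha^{\alpha}\,\alpha^{-1}G(x/R_\alpha)$. The constant is computed precisely as in Corollary~\ref{cor} but with the opposite overall sign, giving $\eta = \tfrac{\alpha-2}{\alpha(4-\alpha)}R_\alpha^2 = -\tfrac{2-\alpha}{\alpha(4-\alpha)}R_\alpha^2$, the asserted value; since $G$ vanishes on $[-R_\alpha,R_\alpha]=\supp\mu$, this also yields $\phi_\alpha=\eta$ there.

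The only genuinely nontrivial point, and the step I expect to require the most care, is the inequality $\phi_\alpha\ge\eta$, i.e.\ $\alpha^{-1}G(x/R_\alpha)\ge0$ for $|x|>R_\alpha$. Here the sign of the factor $(\alpha-1)(\alpha-2)$ flips at $\alpha=1$, which forces a case distinction. For $-1<\alpha\le1$ I would use the form of $G$ furnished by Lemma~\ref{comp2}, so that $\alpha^{-1}G(s) = D_\alpha(|s|-1)^2 + \tfrac{(\alpha-1)(\alpha-2)}2 C_\alpha \int_1^{|s|}\!\int_1^{|y|}\!\int_{|z|}^\infty (w^2-1)^{-\frac{3-\alpha}2}\,dw\,dz\,dy$; since $D_\alpha>0$, $C_\alpha>0$, $(\alpha-1)(\alpha-2)\ge0$, and the triple integral is nonnegative, both summands are $\ge0$. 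For $1<\alpha<2$ the triple-integral term changes sign, so instead I would invoke the first assertion of Lemma~\ref{comp2}, namely that \eqref{eq:compint} itself persists for $1<\alpha\le2$; this expresses the remainder through the single integral $f(s)=\tfrac{(\alpha-1)(\alpha-2)}2 C_\alpha\int_1^{|s|}(y^2-1)^{-\frac{3-\alpha}2}(|s|-y)^2\,dy$, and the minus sign in $\phi_\alpha$ turns its contribution into $-C_\alpha^{-1}R_\alpha^{\alpha}f(x/R_\alpha)$, which is $\ge0$ because $f\le0$ (here $(\alpha-1)(\alpha-2)<0$). The two ranges overlap at $\alpha=1$, where the triple-integral term vanishes and both descriptions collapse to $D_\alpha(|s|-1)^2$, so the inequality holds throughout $-1<\alpha<2$ and all hypotheses of Lemma~\ref{riesz} are verified.
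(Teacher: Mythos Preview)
Your proposal is correct and follows essentially the same approach as the paper: the case $1<\alpha<2$ is handled via the first assertion of Lemma~\ref{comp2} (formula~\eqref{eq:compint}) exactly as in Corollary~\ref{cor} with the sign flip in $\phi_\alpha$ absorbed by the factor $\alpha-2$, and the case $-1<\alpha\le1$ uses the second formula in Lemma~\ref{comp2}, where your $\alpha^{-1}G$ coincides with the paper's $g$ and both summands are manifestly nonnegative. Your write-up is in fact slightly more explicit than the paper's about the factors of $\alpha^{-1}$ and the positivity checks for $D_\alpha$ and $(\alpha-1)(\alpha-2)$.
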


\begin{proof}
	For $1<\alpha<2$ we argue in exactly the same way as in the proof of Corollary~\ref{cor}. Concerning the sign of the remainder term we note that there is change of sign in the definition of $\phi_\alpha$ when $\alpha$ passes through $2$, but this change is compensated by the factor $\alpha-2$ in $f$. Thus everything goes through as before, except that the change of sign of $\phi_\alpha$ leads to a change of sign of $\eta$.
	
	In the case $-1<\alpha\leq 1$ we have by Lemma \ref{comp2}, for any $R>0$,
	\begin{align*}
		\frac1\alpha \int_{-R}^R |x-y|^\alpha \left( R^2-y^2 \right)^{-\frac{\alpha-1}2}dy = C_\alpha' x^2 + \alpha^{-1} C_\alpha' R^2 - R^2 g(x/R)
	\end{align*}
	with
	\begin{align*}
		g(x) & :=
		\begin{cases}
			0 & \text{if}\ |x|\leq 1 \,,\\
			D_\alpha (|x|-1)^2 + \frac{(\alpha-1)(\alpha-2)}{2} C_\alpha \int_1^{|x|} \int_1^{|y|} \int_{|z|}^\infty (w^2 -1)^{-\frac{3-\alpha}2}\,dw\,dz\,dy & \text{if}\ |x|>1 \,.
		\end{cases}
	\end{align*}
	The assertion in the corollary follows from the fact that both terms in the definition of $g(x)$ for $|x|>1$ are nonnegative. The rest follows from computations that are similar as in the proof of Corollary \ref{comp} and that are omitted.
\end{proof}

%%%%%%%%%%%%%%%%%%%%%%%%%%%%%%%%%%%%%%%%%%%%%%%%%%%%%%%%%%%%%%%%%%%%%%%%%%%%%%%%
%%%%%%%%%%%

\bibliographystyle{amsalpha}

\end{document}